\let\proof\@undefined                        
\let\endproof\@undefined                  
\newcommand\footnoteref[1]{\protected@xdef\@thefnmark{\ref{#1}}\@footnotemark}
\DeclareMathOperator*{\argmin}{arg\,min}
\newcommand{\sgn}{\textnormal{sgn}}
\newcommand{\diag}{\textnormal{diag}}
\newif\if@restonecol
 \newtheorem{lem}{Lemma}
\newtheorem{theorem}{Theorem}
\newtheorem{definition}{Definition}
\newtheorem{assumption}{Assumption}
\newtheorem{proposition}{Proposition}
\newtheorem{problem}{Problem}
\newcommand{\sh}[1]{{\color{black} #1}}
\newcommand{\shn}[1]{{\color{black} #1}}
\newcommand{\yo}[1]{{\color{black} #1}}
\newcommand{\sy}[1]{{\color{black} #1}}
\newcommand{\pt}[1]{{\color{black} #1}}
\newcommand{\yy}[1]{{\color{black} #1}}
\title{Control Barrier Functions for Linear Continuous-Time Input-Delay\\\ Systems with Limited-Horizon  Previewable Disturbances}
\author{Tarun Pati$^*$, Seunghoon Hwang$^*$ and Sze Zheng Yong
\thanks{$^*$Authors contributed equally. T. Pati  and S.Z. Yong are with the Mechanical and Industrial Engineering Department, Northeastern  University, Boston, MA 
(
{\tt\footnotesize \{pati.ta,s.yong\}@northeastern.edu}); S. Hwang is with the School for Engineering of Matter, Transport and Energy, Arizona State University, Tempe, AZ (
{\footnotesize\tt{shwang45@asu.edu}}). 
This work was supported in part by NSF grants CMMI-1925110, CNS-2312007 and CNS-2313814.} 
}
\begin{document}

\maketitle
\thispagestyle{empty} 

\begin{abstract} 
Cyber-physical and autonomous systems are often equipped with mechanisms that provide predictions/projections of future disturbances, e.g., road curvatures, commonly referred to as preview or lookahead, but this preview information is typically not leveraged in the context of deriving control barrier functions  (CBFs) for safety. This paper proposes a novel limited preview control barrier function (LPrev-CBF) that avoids both ends of the spectrum, where on one end, the standard CBF approach treats the (previewable) disturbances simply as worst-case adversarial signals and on the other end, a recent Prev-CBF approach assumes that the disturbances are previewable and known for the entire future. 
Moreover, our approach applies to input-delay systems and has recursive feasibility guarantees since we 
explicitly take input constraints/bounds into consideration. 
Thus, our approach provides strong safety guarantees in a less conservative manner than standard CBF approaches while considering a more realistic setting with limited preview and input delays. 
\end{abstract}


\section{Introduction}

Many cyber-physical and autonomous systems (e.g., self-driving cars and robot exoskeletons)  possess forecasting tools such as forward-looking sensors (e.g., cameras and LIDAR, topographical maps), and predictive models to anticipate what lies ahead. This preview information for a window into the future, if used correctly, can significantly improve system performance and safety. However, many control strategies often neglect to use this information and  opt instead to consider worst-case scenarios, especially in the context of safety when computing robust controlled invariant sets, e.g., \cite{rungger2017computing} or control barrier functions (CBFs), e.g.,  \cite{ames2016control,ames2019control,pati2023robust}. On the other hand, optimal and model predictive control (MPC) methods, e.g.,  \cite{tomizuka1975optimal,garcia1989model}, 
do use certain preview information but often lack recursive safety and feasibility properties.

Recent research underscores the advantages of using preview information in safety controls for discrete-time systems, including those with input delays \cite{liu2021value,liu2020scalable}, 
leading to enhanced safety with increased preview time. For continuous-time systems,  
the predictive CBF method  \cite{breeden2022predictive} proposed an approach that can leverage information about `controllable' predicted trajectories, while our recent work in \cite{pati2023preview} introduced a preview CBF approach that can utilize information of previewable (but `uncontrollable') disturbances such as road gradients or curvatures or predicted future motion of other agents. However, the latter assumes that the preview horizon is as long as needed (unlimited), which is not always realistic, e.g., due to limited sensing range of cameras or LIDARs. 
Further, to  our best knowledge, such techniques for continuous-time systems with input delay, e.g., due to network latency or hardware constraints, are still lacking. CBF tools have been explored for input-delay systems   \cite{jankovic2018control,singletary2020control}, but they do not or cannot make use of preview information.

\emph{Contributions.} Building on our \yo{prior design of Prev-CBFs with unlimited preview} in \cite{pati2023preview}, this paper presents \emph{limited preview control barrier functions} (LPrev-CBFs) for linear continuous-time input-delay systems where the preview horizon for the previewable disturbances is limited and fixed, which better reflects real-world settings where sensing ranges are limited. This research relaxes the restrictive setting in \cite{pati2023preview} that assumes unlimited preview and also addresses 
safety concerns stemming from input delays. 

In contrast to standard CBFs with or without input delays, e.g., \cite{ames2016control,ames2019control,jankovic2018control}, that (implicitly) enforce robust
safety by considering the worst-case future disturbances, 
our approach leverages preview information, such as sensor data, as previewable disturbances to mitigate the conservatism. 
On the other hand, in contrast to our prior Prev-CBF approach \cite{pati2023preview} that assumes all future disturbances are previewable, our LPrev-CBF only uses the preview information of the disturbances for a limited and fixed horizon and considers the worst-case  unpreviewed disturbances beyond that horizon. 
This is particularly beneficial in minimizing the need for interventions while still ensuring safety and robustness from disturbances, 
In other words, the LPrev-CBF is designed for a realistic scenario while taking advantage of available preview information to be more permissive and the associated safety controller is less likely to be activated (i.e., less interventions when used as a safety filter) when compared to standard CBFs.   

Additionally, our closed-form LPrev-CBF explicitly incorporates input constraints into its design and consequently, it is naturally guaranteed to be recursively feasible (and safe) when incorporated into an optimization-based safety controller to modify any nominal (input-delay) controller. Further,  the results in this paper are
in itself a novel contribution even in the absence of
input delay, as an extension of our prior work in \cite{pati2023preview} to consider the setting when the preview horizon is
limited and fixed.

\sh{The efficacy of the proposed LPrev-CBFs is demonstrated via  practical 
examples
---an assistive shoulder robot equipped with interaction torque preview capabilities and a vehicle lane-keeping system that utilizes road curvature preview.} 

\pt{This paper is structured as follows. The problem of interest is outlined in Section II. Next, In Section III, we elaborate on the our proposed LPrev-CBFs to solve this problem. Then, we illustrate the efficacy of our approach in Section IV using examples of an assistive shoulder robot and vehicle lane keeping, and also 
discuss the advantages of preview for safety of a linear continuous-time  input-delay system. Finally, we conclude by presenting a summary of our contributions and some future work in Section V.}

\section{Preliminaries and Problem Formulation}
\subsection{Notations} 
$\mathbb{R}_+$ and $\mathbb{R}^n$ denote the set of non-negative real numbers and $n$-dimensional Euclidean space, respectively. An identity matrix of size $n$ is represented by $I_n$ and a $m\times n$ matrix of zeros is represented by $\mathbf{0}_{m\times n}$. Additionally, all vector inequalities represent element-wise inequalities, while $|\cdot|$ and $\sgn(\cdot)$ serve as element-wise absolute value and signum operators, respectively, and $\diag(v)$ represents a diagonal matrix whose diagonal elements are elements of a vector $v$. Further, a class $\mathcal{K}_{\infty}$ function $\alpha: [0,\infty) \rightarrow [0,\infty)$ is strictly increasing and continuous with $\alpha(0)=0$ and $\lim_{r\rightarrow\infty}\alpha(r)=\infty$.
\subsection{Problem Formulation}\label{probform}
\sh{Consider the continuous-time linear control system that includes a time-delayed control input, along with additive previewable disturbances. 
This system is denoted as $\Sigma_{delay}$ and is described as follows:}
\begin{align} \label{eqn:delay_sys} 
\hspace{-0.1cm}\begin{array}{ll}
\Sigma_{delay}:
    \dot{x}(t) \hspace{-0.05cm}=\hspace{-0.05cm} A {x}(t)\hspace{-0.07cm}+\hspace{-0.07cm} Bu(t\hspace{-0.07cm}-\hspace{-0.07cm}T_i) \hspace{-0.07cm}+\hspace{-0.07cm} B_d d(t),
\end{array}\hspace{-0.35cm}
\end{align}
where $x(t) \in \mathcal{X} \subseteq \mathbb{R}^{n}$ represents the system states, $u(t) \in \mathcal{U} \subseteq \mathbb{R}^{m}$ is the control input subject to a \yo{(fixed)} time delay $T_i$, $d(t) \in \mathcal{D} \subseteq \mathbb{R}^{l}$ denotes bounded and previewable disturbances \yo{(only for a \emph{fixed} preview horizon $T_p$, beyond which they are not previewed/uncertain)}. 
Specifically, the disturbance set is  $\mathcal{D} \triangleq \{ d \mid |d| \leq d_m \}$ and the control input set is $\mathcal{U} \triangleq \{ u \mid |u| \leq u_m \}$, where $d_m$ and $u_m$ represent the disturbance and actuation bounds\footnote{\label{note1}\pt{For ease of exposition, we assumed symmetric bounds. Any asymmetric bounds can be considered by taking their midpoints as known signals and  deviations from these midpoints as signals with symmetric bounds.}}, respectively. 


\yo{The term ``previewable disturbance" is used to refer to any exogenous inputs, signals, or parameters for which future values may be anticipated/known. Examples 
include a reference signal in tracking tasks or the predicted trajectories of other agents, as well as 
road conditions such as curvature, gradient, or friction coefficients that could be measured through limited-range sensing and perception modules.}

\yo{Inspired by the literature on time-delay control systems, we will also represent $\Sigma_{delay}$ using \emph{predicted states} $z(t) \triangleq x(t+T_i)$, as:}
\begin{align}\label{eqn:delay_sys_z} 
\hspace{-0.375cm}\begin{array}{ll}
\yo{\Sigma_{{pred}}\!:\!} 
\begin{cases}
\dot{z}(t) \hspace{-0.07cm}=\hspace{-0.07cm} A z(t) \hspace{-0.07cm}+\hspace{-0.07cm} B u(t) \hspace{-0.07cm}+\hspace{-0.07cm} B_d d(t \hspace{-0.07cm}+\hspace{-0.07cm} T_i), \\
y(t) \hspace{-0.07cm}=\hspace{-0.07cm} C z(t),
\end{cases}
\end{array}\hspace{-0.9cm}
\end{align}
where the initial predicted state $z(0)$ is as given in \eqref{eq:z_exact} with $u(\tau)$ for all $\tau \in [-T_{i},0]$.
The scalar output $y(t)\in \mathbb{R}$ 
represents the system variable of interest that relates to 
the safety of the system. Specifically, we consider \emph{system safety} as the satisfaction of a desired output constraint\footnoteref{note1} given as: 
\begin{align}\label{eq:outputbounds}
    |y(t)|=|Cz(t)| \le y_m, \quad \forall t\ge 0,
\end{align}
with known output bounds $y_m \in \mathbb{R}$. In this equivalent form, $z(t) \in \mathcal{Z} \subseteq \mathbb{R}^{n}$ serves as the new system states, and $y(t) \in \mathbb{R}$ is the scalar output.
\sh{We further assume that \yo{$\Sigma_{pred}$ with control input  $u(t)$ and disturbance (input) $d(t)$ as well as output $y(t)$} has a relative degree of 2, meaning $CB = 0$, $CB_d = 0$, $CAB \neq 0$ and $CAB_d \neq 0$.} 

In contrast to the assumption in our prior work \cite[Assumption 2]{pati2023preview} that the preview horizon is as long as needed \yo{(unlimited)}, this paper considers the (more realistic) setting where the preview horizon $T_p$ is limited and fixed, beyond which the \emph{unpreviewed} disturbance is unknown but bounded. 
\begin{assumption}\label{as:1}
The delay time $T_i$ and preview horizon $T_p$ are known and fixed/constant and they are such that $T_p> T_i$.
\end{assumption}
\begin{assumption}\label{as:2}
For a given time $t \in \mathbb{R}_+$ and known preview horizon $T_p$, the previewed disturbance $\mathbf{d}_p(t)\triangleq \{d(\tau)\in \mathcal{D},t \le \tau  \yo{\ <\ }
t+T_p\}$ is known. Further, beyond the preview horizon, we define \yo{the \emph{unpreviewed} disturbance} $\mathbf{d}_{np}(t)\triangleq \{d(\tau)\in \mathcal{D},t+T_p \le \tau  \yo{\ <\ }
\infty\}$, which \yo{is unknown but} bounded with known bounds.
\end{assumption}

By assuming $T_p > T_i$, the \emph{predicted state} $z(t)$ in \eqref{eqn:delay_sys_z}, \yo{i.e.,
\begin{align}\hspace{-0.1cm}
\begin{array}{rl}
    \textstyle z(t)\hspace*{-0.25cm}&=e^{AT_i} x(t)\\ &\ + \int_{0}^{T_i}\hspace{-0.15cm} e^{A(T_i-\tau)} (Bu(t\hspace{-0.05cm}-\hspace{-0.05cm}T_i\hspace{-0.05cm}+\hspace{-0.05cm}\tau) +B_d d(t\hspace{-0.05cm}+\hspace{-0.05cm}\tau))
     d\tau,
     \end{array}\hspace{-0.1cm}\label{eq:z_exact}
\end{align}
is} exactly known; thus, we can equivalently consider $\Sigma_{pred}$ in lieu of $\Sigma_{delay}$ and for simplicity, we shall also directly define the safe sets based on $z(t)$ under the assumption that the system is safe for all $ 0\le t < T_i$ such that $z(0)$ is within the controlled invariant set defined below.

\begin{definition}[Safe Sets]\label{def:safesets}
Let $ S_{z}\subseteq \mathbb{R}^{n} $ be a \emph{safe set} of $ \Sigma_{pred}$ that describes desirable/given safety constraints on the states, and let $S_{z,p} \subseteq \mathbb{R}^{n} \times \mathcal{D}^{[0,T_p)}$ 
be the \emph{$T_p$-augmented safe set} of $\Sigma_{pred}$, defined as 
\begin{align*}
  S_{z,p}\triangleq \{(z,\mathbf{d}_{p}) 
  \mid z \in S_{z}, \mathbf{d}_{p}\in \mathcal{D}^{[0,T_p)}\},
\end{align*}
where $\mathcal{D}^{[0,T_p)}$ is the set of all trajectories of $d(\tau)$ within the time interval of $[0, T_{p}]\triangleq \{\tau|0\le\tau<T_{p}$\}, defined as,
\begin{align*}
   \mathcal{D}^{[0,T_p)}\triangleq \{d(\tau), \forall \tau \in [0,T_{p}]\mid d(\tau)\in \mathcal{D}\}.
\end{align*}
\end{definition}

\begin{definition}[Controlled Invariant Set]
\label{def:inv_set}
A set $ \mathcal{C}\subseteq S_z $ is a \emph{\yo{robust} controlled invariant set} of $\Sigma_{pred}$ in a safe set $ \sy{S_{z}\subseteq \mathbb{R}^{n} }$ if for all $ {z}(0)\in \mathcal{C} $, there exists some $  {u}(t)\in \mathbb{R}^{m} $  such that 
 for all $ d(t)\in \mathcal{D} $, we have $z(t)\in \mathcal{C} \subseteq S_z$, $\forall t\ge 0$. $\mathcal{C}_{max}$ is \emph{the maximal \yo{robust} controlled invariant set} in \sy{$S_{z}$} if $\mathcal{C}_{max}$ contains all robust controlled invariant sets 
 in $S_{z}$.
	
	Further, a set $\mathcal{C}_{p} \in S_{z,p}$
 is a \emph{\yo{limited preview} controlled invariant set} of \yo{$\Sigma_{pred}$} %
 in an augmented safe set $ S_{z,p}$ if for all $ ({z}(0),\mathbf{d}_p(0))\in \mathcal{C}_{p} $, 
 there exists some $  {u}(t)\in \mathbb{R}^{m}$ 
 such that \yo{for all $ \mathbf{d}_{np}\in \mathcal{D}^{[T_p,\infty)}$, we have 
 $(z(t),\mathbf{d}_p(t)) 
 \in \mathcal{C}_{p} \subseteq S_{z,p}$}, $\forall t\ge 0$. $\mathcal{C}_{max,p}$ is \emph{the maximal \yo{limited preview}  controlled invariant set} in $\sy{S_{z,p}}$ if $\mathcal{C}_{max,p}$ contains all \yo{limited preview}  controlled invariant sets of $\Sigma_{pred}$ in  \sy{$S_{z,p}$}.
\end{definition}

Additionally, the definitions presented herein can be interpreted as the continuous-time analogs of their discrete-time counterparts delineated in \cite{liu2020scalable}. Of particular significance is the proof provided in the aforementioned study \yo{that the maximal controlled invariant sets for systems without preview is a subset of the maximal controlled
invariant sets for systems with preview} 
in a discrete-time framework, even 
\yo{in the presence of} input delays. Inspired by these findings, we postulate that preview may \yo{also} offer 
\yo{similar} advantages in continuous-time systems \yo{with} 
input delays.

In contrast to the objective of identifying the maximal \yo{limited preview} controlled invariant set for $\Sigma_{pred}$, this study shifts its focus towards the exploration of \yo{limited preview} control barrier functions with preview capabilities for $\Sigma_{pred}$. Specifically, we aim to render \shn{some} time-varying set $\mathcal{C}_{z,t} \subseteq S_z$ 
controlled invariant. To achieve this, we introduce a novel concept of a time-varying \yo{`limited preview safe set',} 
denoted as $\mathcal{C}_{z,p,t} \subseteq S_{z,p}$, which is not only  controlled invariant but also serves to imply the \shn{existence of some 
$\mathcal{C}_{z,t} \subseteq S_z$ that is controlled invariant by contruction/design}. It is worth noting that the efficacy of control barrier functions for systems without input delay, but with preview capabilities \yo{(for an ``infinite"/unlimited horizon)}, has been previously established in \yo{our prior work} \cite{pati2023preview}.

\begin{definition}[\yo{Limited Preview Safe Set}]
\label{def:RprevSafe}
    Given a predictive system with preview $\Sigma_{pred}$ (with known $\mathbf{d}_p \in \mathcal{D}^{[0,T_p)}$ and unknown $\mathbf{d}_{np} \in \mathcal{D}^{[T_p,\infty)}$), a \yo{super-level set} 
    $\mathcal{C}_{z,p,t}$ 
    defined on a 
    time-varying function $h:\mathcal{X} \times \mathcal{D}^{[0, T_p)} 
    \times \mathbb{R}_+ \rightarrow \mathbb{R}$: 
\begin{align}
 \mathcal{C}_{z,p,t}&\triangleq \{(z,\mathbf{d}_p,t) \mid h(z,\mathbf{d}_p,t)\ge 0,
    \},\hspace{-0.2cm}\label{eq:Cxpw}
\end{align}
which, in turn, is defined based on another time-varying function $h_{np}$ according to $h(z,\mathbf{d}_p,t)\triangleq \min_{\mathbf{d}_{np}\in \mathcal{D}^{[T_p,\infty)}} h_{np}(z,\mathbf{d}_p,\mathbf{d}_{np},t)$ with $\mathcal{D}^{[T_p,\infty)}$ being the set of all trajectories of $d(t)$ starting from $T_{p}$, and its boundary $\partial \,\mathcal{C}_{z,p,t}$ and interior $\text{Int}(\mathcal{C}_{z,p,t})$ 
similarly defined with the $\ge$ operator being replaced by $=$ and $>$, respectively, is a \emph{limited preview safe set} for $\Sigma_{pred}$ if $(z(t),\mathbf{d}_p(t),t)\in \mathcal{C}_{z,p,t}$ for all $t \ge 0$ implies that $z(t)\in S_z$ for all $t \ge 0$. 
\end{definition}

\shn{Note that, by design, the limited preview robust safe set in the above definition needs to be defined or chosen such that its controlled invariance implies the existence of some $\mathcal{C}_{z,t} \subseteq S_z$ that is controlled invariant.}

Then, the problem of interest 
can be formally written as:

\sh{\begin{problem}[Safety with \yo{Limited Preview}] 
\label{prob:1} 
Given \yo{an input-delay} system with preview $\Sigma_{delay}$ \yo{in \eqref{eqn:delay_sys} satisfying Assumptions \ref{as:1}--\ref{as:2}}, its corresponding \yo{equivalent predictive system $\Sigma_{pred}$ in \eqref{eqn:delay_sys_z}} and a safe set $S_z$ (cf. Definition \ref{def:safesets}), 
construct a \yo{limited preview} control barrier function \yo{(LPrev-CBF)} 
\sy{corresponding to \sy{$\mathcal{C}_{z,p,t}$} in \eqref{eq:Cxpw}} that guarantees 
 \yo{limited preview} controlled invariance of $\Sigma_{pred}$ in \sy{${S}_{z}$} \yo{(and thus,  safety of $\Sigma_{delay}$ under Assumption \ref{as:1})}.  
\end{problem}}

\vspace{-8pt}\section{Main results}\label{sec:main}

\yo{We now describe our proposed method to address Problem \ref{prob:1}, where we introduce a novel \yo{limited preview} CBF in closed-form and describe how it can be implemented computationally to guarantee safety.} 
\subsection{Limited Preview Control Barrier Functions \yo{(LPrev-CBFs)}}
\pt{First, we present the definition of  \yo{Limited Preview Control Barrier Function (LPrev-CBF) as an extension of} 
Preview Control Barrier Function (Prev-CBF) in \cite{pati2023preview}.}
\begin{definition}[\yo{Limited Preview} \sy{CBF}]
\label{def:rCBF}
\pt{Given an input-delay system $\Sigma_{delay}$, its corresponding predictive system $\Sigma_{pred}$ with a fixed-horizon previewable disturbance that satisfies Assumptions \ref{as:1}-\ref{as:2} and a safe set $S_z$ (cf. Definition \ref{def:safesets}), then \yo{a continuously differentiable function $h:\mathcal{X} \times \mathcal{D}^{[0, T_p)}  \times \mathbb{R}_+ \rightarrow \mathbb{R}$} 
is a \yo{limited preview} CBF corresponding to uncertainty dependent safe set $\mathcal{C}_{z,p,t}$ \shn{in \eqref{eq:Cxpw}}, if \yo{there exist a control input $u\in \mathcal{U}$ and} a class $\mathcal{K}_\infty$ function $\alpha$  such that:}
\begin{gather}\label{eq:rCBF}
    \dot{h}(z,u,\mathbf{d}_p,t)\hspace{-0.1cm}\geq\hspace{-0.1cm}-\alpha(\yo{h(z,\mathbf{d}_p,t)}), 
\end{gather}
for all $z \in \mathcal{X}$ and $t \in \mathbb{R}_+$.  
Further, for any $t\in\mathbb{R}_{+}$, $z \in \mathcal{X}$ and \yo{$\mathbf{d}_p\in \mathcal{D}^{[0,T_p)}$}, an associated 
safe input set is defined as: 
\begin{align}
\begin{array}{r}
\hspace{-0.18cm} K_{\mathcal{C}}(z,\mathbf{d}_p\sy{,t})\hspace{-0.08cm}=\hspace{-0.08cm}\{u\in \mathcal{U} \mid \yo{\eqref{eq:rCBF} 
\text{ holds}}\}.
\end{array}
\end{align}
\end{definition}
\vspace{-8pt}\pt{\begin{theorem}[Safety with Limited Preview]\label{rbs:safe}
Given a predictive system $\Sigma_{pred}$ with a fixed-horizon previewable disturbance that satisfies Assumptions \ref{as:1}--\ref{as:2} and a safe set $S_z$ (cf. Definition \ref{def:safesets}), if $ h $ is a LPrev-CBF corresponding to the \yo{limited preview} safety set $ \mathcal{C}_{z,p,t} $ from  \eqref{eq:Cxpw}, then for the predictive system $\Sigma_{pred}$ with $z(0)\in S_z$, any Lipschitz continuous controller $ u(x,\yo{\mathbf{d}_p},t)\in K_{\mathcal{C}}(z,\mathbf{d}_p,t)$ \yo{with $z$ in \eqref{eq:z_exact}} ensures the  controlled invariance of the \yo{limited preview}  safety set $ \mathcal{C}_{z,p,t} $. \shn{Consequently, there exists some set $ \mathcal{C}_{z,t}\subset S_z $ for the system $ \Sigma_{pred} $ for which $ u(x,\yo{\mathbf{d}_p},t) $ also ensures its controlled invariance.}
Thus, the predictive system $ \Sigma_{pred} $ with preview is guaranteed to be safe, i.e., $ z(t) \in S_z,  \forall t \geq 0 $.
\end{theorem}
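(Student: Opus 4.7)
The plan is to mirror the classical CBF safety argument (as in Ames et al.\ and our earlier Prev-CBF work \cite{pati2023preview}), but carried out on the augmented state $(z,\mathbf{d}_p,t)$ rather than on $z$ alone, so that the worst-case over unpreviewed disturbances $\mathbf{d}_{np}$ is absorbed into the definition of $h$ via its $\min_{\mathbf{d}_{np}}$ structure. The result then reduces to two logical steps: (i) forward invariance of $\mathcal{C}_{z,p,t}$, and (ii) translating this into safety in $S_z$ through the way $\mathcal{C}_{z,p,t}$ is constructed in Definition \ref{def:RprevSafe}.

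First, I would establish forward invariance of $\mathcal{C}_{z,p,t}$ by a comparison-lemma argument. Because $u(x,\mathbf{d}_p,t)$ is Lipschitz continuous and $h$ is continuously differentiable, the closed-loop predictive system has a unique solution $z(t)$, with $(z(0),\mathbf{d}_p(0),0)\in\mathcal{C}_{z,p,0}$ by the hypothesis $z(0)\in S_z$ together with the way $\mathcal{C}_{z,p,t}$ is chosen. Along this trajectory, the requirement $u\in K_{\mathcal C}(z,\mathbf{d}_p,t)$ forces $\dot h(z,u,\mathbf{d}_p,t)\ge -\alpha(h(z,\mathbf{d}_p,t))$. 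A standard comparison argument with the scalar ODE $\dot{\eta}=-\alpha(\eta)$, $\eta(0)=h(z(0),\mathbf{d}_p(0),0)\ge 0$, then yields $h(z(t),\mathbf{d}_p(t),t)\ge 0$ for all $t\ge 0$, i.e., $(z(t),\mathbf{d}_p(t),t)\in\mathcal{C}_{z,p,t}$ for all $t\ge 0$, which is exactly limited preview controlled invariance.

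Second, I would use the worst-case construction $h(z,\mathbf{d}_p,t)=\min_{\mathbf{d}_{np}\in\mathcal{D}^{[T_p,\infty)}}h_{np}(z,\mathbf{d}_p,\mathbf{d}_{np},t)$ to pull this invariance back to safety in $S_z$. Having $h\ge 0$ along the trajectory implies $h_{np}(z(t),\mathbf{d}_p(t),\mathbf{d}_{np},t)\ge 0$ for \emph{every} admissible unpreviewed tail $\mathbf{d}_{np}$, and by the design condition in Definition \ref{def:RprevSafe} this is precisely what guarantees $z(t)\in S_z$ for all $t\ge 0$. The controlled invariant subset of $S_z$ can then be exhibited explicitly as $\mathcal{C}_{z,t}\triangleq\{z\in\mathbb{R}^n\mid h(z,\mathbf{d}_p(t),t)\ge 0\}$, since the same controller $u$ that renders $\mathcal{C}_{z,p,t}$ invariant also keeps $z(t)$ in $\mathcal{C}_{z,t}\subseteq S_z$ by construction.

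The main obstacle I expect is the bookkeeping around the time-varying character of $h$: as $t$ advances, the preview window $\mathbf{d}_p(t)$ slides, with new values $d(t+T_p)$ entering and old ones leaving, so $\dot h$ must correctly account for both the state-driven evolution (through $\dot z = Az+Bu+B_d d(t+T_i)$) and this exogenous sliding of $\mathbf{d}_p(t)$. Verifying that the notion of $\dot h$ used in the CBF inequality \eqref{eq:rCBF} is the total time derivative along the closed-loop augmented dynamics (and not merely the partial derivative along $z$) is the crux; once this is checked, the comparison step and the min-based implication of safety are both routine. A secondary technicality is confirming that the $\min$ over $\mathbf{d}_{np}$ is attained (or that an infimum-based argument suffices) so that $h$ is well-defined and the pointwise inequality $h_{np}\ge 0$ really does hold for every tail disturbance.
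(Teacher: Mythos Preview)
Your proposal is correct and follows essentially the same approach as the paper's proof: establish forward invariance of $\mathcal{C}_{z,p,t}$ via the standard CBF/comparison argument applied to \eqref{eq:rCBF}, then invoke the construction in Definition \ref{def:RprevSafe} to conclude $z(t)\in S_z$ and the existence of $\mathcal{C}_{z,t}$. The paper's proof is considerably terser---it simply asserts forward invariance and appeals to ``by construction'' for the safety implication and the existence of $\mathcal{C}_{z,t}$---so your elaboration of the comparison step, the role of the $\min_{\mathbf{d}_{np}}$ structure, and the explicit candidate $\mathcal{C}_{z,t}=\{z\mid h(z,\mathbf{d}_p(t),t)\ge 0\}$ is more detailed than, but fully aligned with, what the paper does.
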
}
\begin{proof}
\vspace{-10pt}\pt{
If $h$ is a LPrev-CBF associated with the \yo{limited preview} safe set $\mathcal{C}_{z,p,t}$, then any controller $u(x,\yo{\mathbf{d}_p},t) \in K_{\mathcal{C}}(z,p,t)$ enforces \eqref{eq:rCBF} 
for all $z\in\mathcal{X},\ \forall t\ge0$; hence $\mathcal{C}_{z,p,t}$ as defined in \eqref{eq:Cxpw} is forward control invariant, i.e., $h(z,\mathbf{d}_p,t)\ge 0, \ \forall t\ge0$. Consequently, the predictive system $\Sigma_{pred}$ and in turn the \yo{input-delay} system $\Sigma_{delay}$ are 
safe with 
preview for all $ t\ge0$ with respect to the set $\mathcal{C}_{z,t}\subseteq S_z$, \shn{where $\mathcal{C}_{z,t}$ exists by construction.} 
}
\end{proof}

\pt{\subsection{Closed-Form Candidate \yo{Limited Preview} CBF}
We now introduce a framework to formulate a closed-form candidate \yo{limited preview} CBF (LPrev-CBF)  to derive a \yo{limited preview} 
safe set (cf. Definition \ref{def:RprevSafe}) for the predictive system $\Sigma_{pred}$, given desired output constraints in \eqref{eq:outputbounds}. 

\pt{
The proposed framework is inspired by our \yo{prior work} \cite{pati2023preview}, where a predictor-based method \cite{fridman2014introduction,jankovic2018control} is considered, with the predicted state $z$ projected $T$ seconds into the future as: 
\begin{align} \label{eq:x_T_p}
\begin{array}{rl}
    \hspace{-0.15cm}
    z(t\hspace{-0.05cm}+\hspace{-0.05cm}T)\hspace{-0.075cm}=\yo{\phi(t,T)}+\epsilon(t,T) +\hspace{-0.05cm}\int_{0}^{T}\hspace{-0.15cm} e^{A(T-\tau)} Bu(t\hspace{-0.05cm}+\hspace{-0.05cm}\tau) 
     d\tau,
\end{array}\hspace{-0.1cm}
\end{align}
where $\yo{\phi(t,T)}\triangleq \yo{e^{AT} z(t) +}\int_{0}^{T_{\delta}} e^{A(T-\tau)}B_d d(t+T_i+\tau) d\tau$,  $T_{\delta}\triangleq \min(T_p-T_i,T)$ and $\epsilon(t,T)\triangleq  \int_{T_{\delta}}^{T} e^{A(T-\tau)}B_d d(t+T_i+\tau) d\tau\hspace{-0.05cm} $ . Note that at any given $t$ and given preview of the disturbance \yo{$T_p > T_i$} seconds into the future, $\yo{\phi(t,T)}$ can always be pre-computed, \yo{while $\epsilon(t,T)$ contains unpreviewed disturbances  that the LPrev-CBF must be robust with respect to their worst-case realizations}.
Then, to guarantee the satisfaction of the output bounds in \eqref{eq:outputbounds}, $\forall t\ge 0$, we enforce that the (immediate) future minima or maxima of worst-case output trajectories under maximum acceleration or deceleration \yo{inputs}, respectively, always satisfies the output bounds, $t\ge0$. Imposing these minima or maxima constraints also ensures the constraint feasibility at all times from the current time $t$ to the time associated with the minima or maxima, which we call the worst-case (minimum) \emph{stopping time}, as defined below:}
\pt{
\begin{definition}[\yo{Worst-Case} Stopping Time]\label{def:Sotpping}
At any given time $t \in \mathbb{R}_{+}$ for the predictive system $\Sigma_{pred}$ with fixed-horizon preview in \eqref{eqn:delay_sys_z}, we define the worst-case (minimum) stopping time $T_s(t)$ as the minimum $T_s(t)$ such that the worst-case output velocity \yo{$\dot{y}_w(t + T_s(t))=C\dot{z}_w(t+T_s(t)) =CAz_w(t+T_s(t))= 0$} 
under maximum \yo{control input acceleration and  disturbance-induced deceleration} when $\dot{y}(t)=C\dot{z}(t) \le 0$ or maximum \yo{control input deceleration and  disturbance-induced acceleration} when $\dot{y}(t)=C\dot{z}(t) \ge 0$. 
\end{definition}

\yo{From \eqref{eqn:delay_sys_z}, under the relative degree 2 assumption, 
\begin{align} 
\shn{\ddot{y}(t)=CA^2 z(t)+CABu(t)+CAB_d d(t),} \label{eq:y_ddot}
\end{align}
from which we can observe that when $\dot{y}(t)\le 0$ \footnote{\pt{Note that per Assumption \ref{as:1}, $\dot{y}(t)=C\dot{z}(t)=CAz(t)$ is exactly known.}}, the maximum obtainable acceleration is with $\diag(\sgn(CAB))u_m$ under worst-case disturbance $-\diag(\sgn(CAB_d))d_m$, while when $\dot{y}(t)\ge 0$, the maximum deceleration is obtained with $-\diag(\sgn(CAB))u_m$ under worst-case disturbance $\diag(\sgn(CAB_d))d_m$. In other words, the worst-case output velocity $\dot{y}_w(t + T_s(t))$ and worst-case output $y_w(t+T_s(t)$ can be found by applying $u(\tau)=\hat{u}(t)$ with
\begin{align} 
\hat{u}(t)\triangleq -\sgn(\dot{y}(t))\diag(\sgn(CAB))u_m \label{eq:uhat}
\end{align}
under worst-case disturbance $d(\tau)=\hat{d}(t)$ with 
\begin{align} 
\hat{d}(t)\triangleq sgn(\dot{y}(t))\diag(\sgn(CAB_d))d_m
\label{eq:dhat}
\end{align}
for all $\tau \in [t,T_s(t)]$, resulting in, 

\vspace{-0.3cm}{\small$$ \ddot{y}_w(\tau)=CA z_w(\tau) - sgn(\dot{y}(t))|CAB| u_m +sgn(\dot{y}(t))|CAB_d| d_m,$$}\noindent\shn{which is obtained from \eqref{eq:y_ddot} with $u(t)$ in \eqref{eq:uhat} and $d(t)$ in \eqref{eq:dhat},}
and the worst-case stopping time is the $T_s(t)$ that is the solution to $\dot{y}_w(t + T_s(t))=CAz_w(t+T_s(t))=0$.} 
}

\sh{
It is worth noting that for a known/computed time-varying \yo{worst-case} stopping time $T_s(t)$ at given time $t\in \mathbb{R}_+$ but under Assumption \ref{as:1} with a fixed preview horizon $T_p$ and fixed input delay $T_i$, Assumption \ref{as:2} when applied to the predicted system $\Sigma_{pred}$ translates to two distinct cases: (1) \yo{When $T_p-T_i > T_s(t)$ (i.e., when the constant time horizon $T_p$ exceeds the stopping time for $z(t)$), the disturbances $d(t+T_i)$ in \eqref{eqn:delay_sys_z} are previewed/known for the entire time interval up to $t+T_s(t)+T_i$}, and 
(2) when \yo{$T_p -T_i \le T_s(t)$, the previewable disturbances $d(t+T_i)$ within the time interval $t+T_p\le \tau\le  t + T_s(t)+T_i$} 
remain indeterminate\yo{/unpreviewed but bounded by $\mathcal{D}$}.  
}

}
\pt{
\yo{Further, n}ote that the idea of worst-case (minimum) stopping time is akin to \yo{and an extension of} the (minimum) stopping time in \cite{pati2023preview}, \yo{while} the concept of utilizing the immediate future minima or maxima of the output trajectory is inspired by the idea of minimizers/maximizers in \cite{breeden2022predictive,pati2023preview}  for a ``predicted" desired/reference trajectory. \yo{In particular, 
by enforcing that the worst-case predicted outputs $T_s(t)$ seconds into the future, i.e., \begin{align}\label{eq:futureoutputbounds}
    |Cz_{w}(t+T_s(t))|\le y_{m}, \quad \forall t\ge 0,
\end{align}
with $Cz_{w}(t+T_s(t))$ corresponding to a minimizer/maximizer, we are essentially ensuring the satisfaction of the output constraints for a future moving time horizon that includes the current time. Hence, ensuring the robust controlled invariance of \eqref{eq:futureoutputbounds} corresponding to $\mathcal{C}_{z,p,t}$ implies the robust controlled invariance of \eqref{eq:outputbounds} corresponding to $\mathcal{S}$. }
}

\pt{Next, we present a closed-form candidate LPrev-CBF to compute a controlled invariant \yo{limited preview}  safe set. Note that for the remainder of this manuscript, the (explicit) dependence of $T_s$ and other terms on the current time $t$ is omitted for brevity.}
\pt{\begin{lem}[Closed-Form Candidate \yo{Limited Preview CBF}]\label{first}
Suppose Assumptions \ref{as:1}--\ref{as:2} hold. Then,
\begin{align} \label{eq:h}
\begin{array}{rl}
    \yo{h(z,\mathbf{d}_p,t)}\hspace*{-0.2cm}&=\hspace{-0.0cm}y_m\hspace{-0.cm}-\hspace{-0.0cm}\sgn{(\dot{y}(t))}y_{w}(t+T_s) \\
    &
    \hspace{-0.cm}=\hspace{-0.0cm}y_m\hspace{-0.cm}-\hspace{-0.0cm}\sgn{(\dot{y}(t))}Cz_{w}(t+T_s) \\
    &
    \hspace{-0.cm}\ge 0,
\end{array}
\end{align}
with $z_{w}(t+T_s)=\yo{\phi(t,T_s)+\hat{\epsilon}(t,T_s)}\hspace{-0.05cm}+(\int_{0}^{T_s} e^{A(T_s-\tau)}d\tau)B\hat{u}$ being the worst-case predicted state derived from \eqref{eq:x_T_p}, with $\phi(t,T_s)$ as defined below \eqref{eq:x_T_p} (when $T=T_s$) and $\hat{\epsilon}(t,T_s)\triangleq 
(\int_{T_{\delta}}^{T_s} e^{A(T_s-\tau)}d\tau)B_d \hat{d}\hspace{-0.05cm}$ with \yo{$\hat{d}(\tau)$ defined in \eqref{eq:dhat} 
and $T_{\delta}=\min(T_p-T_i,T_s(t))$, as well as 
$\hat{u}$ defined in \eqref{eq:uhat},} 
is a valid candidate LPrev-CBF that guarantee the satisfaction of the safety bounds in \eqref{eq:futureoutputbounds} corresponding to $\sy{\mathcal{C}}_{z,p,t}$ and, consequently, the safety bounds in \eqref{eq:outputbounds}
associated with $S_{z}$. 
\end{lem}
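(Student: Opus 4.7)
The plan is to decompose the claim into three logically sequential assertions and verify each in turn. First, I would decode the geometric meaning of $h(z,\mathbf{d}_p,t)$ by a case split on $\sgn(\dot{y}(t))$: when $\dot{y}(t)\ge 0$ we have $h = y_m - Cz_w(t+T_s)$, so $h\ge 0 \iff Cz_w(t+T_s)\le y_m$, while when $\dot{y}(t)<0$ we have $h = y_m + Cz_w(t+T_s)$, so $h\ge 0 \iff Cz_w(t+T_s)\ge -y_m$. In each case $h\ge 0$ is exactly the one-sided future-output bound relevant to the current direction of motion, which is precisely \eqref{eq:futureoutputbounds} evaluated at the worst-case extremum.

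Second, I would show that \eqref{eq:futureoutputbounds} implies \eqref{eq:outputbounds}. By Definition~2 of the worst-case stopping time, the worst-case trajectory under $\hat{u},\hat{d}$ in \eqref{eq:uhat}--\eqref{eq:dhat} has $\dot{y}_w(t+T_s)=0$, so $t+T_s$ is a local extremum of $y_w$ on $[t,t+T_s]$; moreover, $\hat{u}$ and $\hat{d}$ are, by construction, the combination that maximizes deceleration (resp.\ acceleration) against the current sign of $\dot{y}(t)$, so the actual output $y(\tau)$ on $[t,t+T_s]$ is dominated in the worst-case direction by $y_w$. Consequently, the bound $|Cz_w(t+T_s)|\le y_m$ propagates to $|y(\tau)|\le y_m$ on the entire interval, and since the LPrev-CBF condition maintains $h\ge 0$ for all $t\ge 0$, the output bound \eqref{eq:outputbounds} holds for all $t\ge 0$. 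The bound opposite to $\sgn(\dot{y}(t))$ is covered by continuity and the fact that $h\ge 0$ was enforced at an earlier time when $\dot{y}$ had the opposite sign (or at $t=0$ by assumption on $z(0)$).

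Third, I would verify that $h$ is a \emph{valid candidate} LPrev-CBF by exhibiting a feasible $u\in\mathcal{U}$ satisfying \eqref{eq:rCBF}. The natural witness is $u=\hat{u}$ from \eqref{eq:uhat}, which lies in $\mathcal{U}$ by construction since $|\hat{u}|=u_m$. Substituting into the predicted dynamics $\Sigma_{pred}$ and differentiating $y_w(t+T_s)$ along the actual trajectory via the chain rule, the term arising from $\dot{T}_s$ vanishes because $CAz_w(t+T_s)=0$ at the stopping time. What remains is a quantity whose sign can be controlled by choosing $u=\hat{u}$: intuitively, applying the same maximum ``braking'' input as assumed in the worst-case construction makes the predicted extremum move in the safe direction (or remain fixed modulo unpreviewed disturbance), so the class-$\mathcal{K}_\infty$ inequality $\dot h \ge -\alpha(h)$ is satisfied with slack.

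The main obstacle will be the third step: careful computation of $\dot h$ requires differentiating $\phi(t,T_s)+\hat{\epsilon}(t,T_s)+\bigl(\int_0^{T_s} e^{A(T_s-\tau)}d\tau\bigr)B\hat u$ through a time-varying stopping time $T_s(t)$ and through the piecewise switching of the preview cut-off $T_\delta=\min(T_p-T_i,T_s(t))$, with additional attention to the non-smoothness of $\sgn(\dot y(t))$ at $\dot y=0$. The cancellation at the extremum (via $CAz_w(t+T_s)=0$) should remove the $\dot T_s$ contribution, and the two preview regimes can be handled separately, with continuity at the transition $T_s=T_p-T_i$ closing the argument.
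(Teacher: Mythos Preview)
Your first two steps match the paper's proof closely: the case split on $\sgn(\dot{y}(t))$ to decode $h\ge 0$ as the relevant one-sided bound, and the extremum argument that \eqref{eq:futureoutputbounds} at the worst-case stopping time propagates to \eqref{eq:outputbounds} on the whole interval $[t,t+T_s]$, are exactly what the paper does (in slightly different order).

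Your third step, however, overshoots what Lemma~\ref{first} asks for. The paper does \emph{not} verify the CBF inequality \eqref{eq:rCBF} here; it argues only that $h$ is a valid \emph{candidate} by exhibiting the piecewise-constant policy $u(\tau)=\hat{u}(t)$ on $[t,t+T_s]$ as a witness that $h\ge 0$ can be maintained---a direct forward-invariance argument with no differentiation of $h$. The $\dot{h}$ computation you outline (chain rule through $T_s(t)$, cancellation of the $\dot{T}_s$ term via $CAz_w(t+T_s)=0$, and separate handling of the two $T_\delta$ regimes) is precisely the content of Proposition~\ref{prop:Prev-CBF}, not Lemma~\ref{first}. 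Your obstacle analysis is accurate and would be the right preparation for that proposition, but for the lemma itself the paper's route is shorter: it never touches $\dot{h}$ and instead relies on the constructive feasibility of $\hat{u}$.
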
}
\vspace{-10pt}\pt{\begin{proof}
First, when $\dot{y}(t)\le0$ under maximum acceleration input $\hat{u}$, the desired worst-case safety constraint is 
\yo{$y_{w}(t+T_s(t))\geq{-y_m}$, where $y_w(t+T_s(t))$ is the smallest possible  $y$  when the system comes to a stop (before changing directions) under the worst-case unpreviewed disturbance $\hat{d}$. Similarly, when $\dot{y}_w(t)\ge0$} 
under maximum deceleration input $\hat{u}$, the desired worst-case safety constraint is \yo{$y_{w}(t+T_s)\leq{y_m}$.} 
Combining these two constraints yields 
\begin{gather} \label{eq:h_1}
y_m\hspace{-0.cm}-\hspace{-0.0cm}\sgn{(\dot{y}(t))}y_{w}(t+T_s) \geq{0}.
\end{gather}
Further, $y_{w}(t+T_s)=Cz_{w}(t+T_s)$, where as described \yo{above \eqref{eq:uhat},} 
$z_{w}(t+T_s)=C y_w(t+T_s)$ is the worst-case predictive output that can be derived from  \eqref{eq:x_T_p} by considering $T=T_s$ as well as $u(\tau)=\hat{u}(t)$ and $d(\tau)=\hat{d}(\tau), \ \forall \tau\in[t+T_p,t+T_s]$ with $\hat{u}$ and $\hat{d}$ as defined in \eqref{eq:uhat} and \eqref{eq:dhat}, 
respectively.

Thus, enforcing \eqref{eq:h_1} in turn enforces \eqref{eq:futureoutputbounds}.  Hence, $h$ is a valid candidate \yo{limited preview} control barrier function (LPrev-CBF), i.e., there always exists a piece-wise constant input $u(\tau)=\hat{u}(t)= -\sgn(\dot{y}(t))\diag(\sgn(CAB))u_{m}$, for all $\tau \in [t,t+T_s]$ that enforces $h(z,\mathbf{d}_p,t)\ge 0$ for all $t\ge 0$. Consequently, due to the guaranteed feasibility of \eqref{eq:futureoutputbounds} at maxima or minima (i.e., when $\dot{y}_{w}(t+T_s)=C\dot{z}_{w}(t+T_s)=0$), the safety constraints is also feasible for the whole time horizon from $t+T_i$ to $t+T_i+T_s$; hence, \eqref{eq:outputbounds} holds.
\end{proof}}

\vspace{-10pt}\subsection{\yo{Worst-Case} Stopping Time}\label{sec:T_s}
\pt{As seen from \eqref{eq:h} in Lemma \ref{first}, the design of the candidate LPrev-CBF depends on the worst-case stopping time $T_s$.}
\begin{lem}[\yo{Worst-Case} Stopping Time] \label{lem:T_s}
 \pt{At any given time $t$ the worst-case stopping time is the smallest positive solution \yo{$T_s(t)$} to the equality $CAz_w(t+T_s)=0$ with $z_w(t+T_s)$ given below  \eqref{eq:h}, \yo{i.e., 
\begin{align}\label{eqn:stop_exp}
CA\phi(t,T_s)+ Ce^{AT_s}B\hat{u}(t)+ Ce^{A(T_s-T_{\delta})}B_d\hat{d}(t) =0,
\end{align}
where} $\hat{u}(t)$, $\hat{d}(t)$ are defined in \eqref{eq:uhat} and 
\eqref{eq:dhat}, 
respectively, \yo{$T_{\delta}$ and $\phi(t,T_s)$} are defined below  \eqref{eq:x_T_p} (with $T=T_s$).

 }
\end{lem}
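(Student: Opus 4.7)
My plan is to derive the equation \eqref{eqn:stop_exp} directly from Definition \ref{def:Sotpping} by substituting the worst-case trajectory expression from Lemma \ref{first} and simplifying using the relative degree 2 assumption. First, I would invoke Definition \ref{def:Sotpping}, which characterizes $T_s(t)$ as the smallest positive time at which the worst-case output velocity $\dot{y}_w(t+T_s) = CA z_w(t+T_s)$ vanishes, where ``worst-case'' means applying the input $\hat{u}(t)$ in \eqref{eq:uhat} and the disturbance $\hat{d}(t)$ in \eqref{eq:dhat} over the interval $[t, t+T_s]$ (with previewed disturbances used for the portion covered by the preview horizon $T_\delta = \min(T_p - T_i, T_s)$).

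Next, I would substitute the closed-form expression for $z_w(t+T_s)$ given below \eqref{eq:h}, namely
\begin{align*}
z_w(t+T_s) = \phi(t,T_s) + \hat{\epsilon}(t,T_s) + \left(\int_0^{T_s} e^{A(T_s-\tau)}\,d\tau\right)B\hat{u}(t),
\end{align*}
into $CA z_w(t+T_s) = 0$. This yields three terms: $CA\phi(t,T_s)$, $CA\hat{\epsilon}(t,T_s)$, and $CA \bigl(\int_0^{T_s} e^{A(T_s-\tau)}d\tau\bigr) B \hat{u}(t)$.

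The key simplification step relies on the relative degree 2 assumption $CB = CB_d = 0$. Using the fundamental theorem of calculus applied to the matrix function $F(\sigma) = C e^{A\sigma} B$, whose derivative is $F'(\sigma) = C A e^{A\sigma} B$, one obtains
\begin{align*}
\int_0^{T_s} C A e^{A(T_s-\tau)} B \, d\tau = C e^{A T_s} B - C B = C e^{A T_s} B,
\end{align*}
and analogously
\begin{align*}
\int_{T_\delta}^{T_s} C A e^{A(T_s-\tau)} B_d \, d\tau = C e^{A(T_s - T_\delta)} B_d - C B_d = C e^{A(T_s - T_\delta)} B_d.
\end{align*}
Applying these identities respectively to the input term and to $CA\hat{\epsilon}(t,T_s)$ (where $\hat{d}$ is constant on $[T_\delta, T_s]$ and factors out of the integral) collapses both integral expressions and produces precisely $Ce^{AT_s} B \hat{u}(t)$ and $Ce^{A(T_s-T_\delta)} B_d \hat{d}(t)$. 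Combining with the untouched $CA\phi(t,T_s)$ term gives \eqref{eqn:stop_exp}, and selecting the smallest positive root (which is the stopping time before any subsequent reversal) completes the argument.

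I do not anticipate a serious obstacle here, as the result is essentially an algebraic reformulation; the only subtlety is to justify the integral identities without assuming $A$ is invertible, which the fundamental-theorem-of-calculus derivation above handles cleanly. A minor bookkeeping point is to verify that the definition of $T_\delta$ in Lemma \ref{first} correctly delineates which portion of the disturbance is previewed versus extrapolated via $\hat{d}(t)$, so that the splitting of the integral into $[0, T_\delta]$ (inside $\phi$) and $[T_\delta, T_s]$ (inside $\hat{\epsilon}$) aligns with the preview/unpreview partition under Assumption \ref{as:2}; this follows directly from the construction in Lemma \ref{first}.
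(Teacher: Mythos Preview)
Your proposal is correct and follows essentially the same route as the paper: both substitute the worst-case trajectory $z_w(t+T_s)$ into $CAz_w(t+T_s)=0$, evaluate the resulting matrix-exponential integrals (the paper writes $C\int_0^{T_s} A e^{A(T_s-\tau)}d\tau B\hat{u}=C(e^{AT_s}-I)B\hat{u}$, which is your fundamental-theorem-of-calculus identity), and then invoke $CB=CB_d=0$ to drop the identity terms. Your explicit remark that the derivation does not require $A$ to be invertible is a nice touch the paper leaves implicit.
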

\begin{proof}
   
    \pt{
    At any given time $t$ as per the predictive state dynamics in \eqref{eqn:delay_sys_z}, the output velocity $T_s$ seconds into the future is  given as $\dot{y}(t+T_s)=C\dot{z}(t+T_s)=CAz(t+T_s)$, (since the system has relative degree 2 with respect to both the input $u(\cdot)$ and the disturbance $d(\cdot)$, $CB=CB_d=0$). Consequently, with $z_w(t+T_s)$ below \eqref{eq:h}, the worst-case output velocity with \yo{$\hat{u}(t)$ and $\hat{d}(t)$ for all $\tau \in [t,t+T_s(t)]$ is given by
    \begin{align*}
    \begin{array}{l}
        \dot{y}_w(t+T_s)= CAz_w(t+T_s)\\
        \textstyle=CA \phi(t,T_s)+C\int_{0}^{T_s} Ae^{A(T_s-\tau)}d\tau B\hat{u}\\
         \textstyle\quad +C\int_{T_{\delta}}^{T_s} Ae^{A(T_s-\tau)}B_d \hat{d}(\tau) d\tau\\
        =CA \phi(t,T_s)+C(e^{A T_s}-I)B \hat{u}+C(e^{A (T_s-T_\delta)}-I)B_d \hat{d}\\
        =CA \phi(t,T_s)+Ce^{A T_s}B \hat{u}+Ce^{A (T_s-T_\delta)}B_d \hat{d},
        \end{array}
    \end{align*}
    where the final equality holds since $CB=0$ and $CB_d=0$ by the relative degree 2  assumption.}
    }
\end{proof}

\subsection{Closed-Form \yo{Limited} Preview Control Barrier Function}
\pt{Now that we have a candidate LPrev-CBF from Lemma \ref{first} and an expression for the \yo{worst-case} stopping time in Lemma  \ref{lem:T_s}, we can prove \yo{that} the candidate LPrev-CBF 
\yo{satisfies} 
the definition of limited preview CBF in Definition \ref{def:rCBF}.}
\pt{
\begin{proposition}[Closed-Form LPrev-CBF]\label{prop:Prev-CBF} Given a \yo{input-delay} system with preview $\Sigma_{delay}$ and a corresponding predictive system $\Sigma_{pred}$ that 
satisfies Assumptions \ref{as:1}--\ref{as:2} with \yo{worst-case} stopping time $T_s(t)$ computed based on Lemma \ref{lem:T_s}, the continuously mapping 
\yo{$h: \mathbb{R}^n \times \mathcal{D}^{[0,T_p)}\times \mathbb{R}_+ \to \mathbb{R}$}  in Lemma \ref{first} 
is a limited preview control barrier function (LPrev-CBF) for $\Sigma_{pred}$, if \yo{\yo{there exist a control input $u\in \mathcal{U}$ and} a class $\mathcal{K}_\infty$ function $\alpha$} 
that satisfy \eqref{eq:rCBF} 
with
\pt{
\begin{align}   \label{eq:cbf_dot}
\hspace{-0.3cm}\begin{array}{r}
\dot{h}(z,u,\mathbf{d}_p\sy{,t})
= -\sgn{(\dot{y}(t))}[Ce^{AT_s}(Az(t)+Bu(t) \\
+B_dd(t+T_i))
+\psi(t,T_s)],
\end{array}\hspace{-0.3cm}
\end{align}
with $\psi(t,T_s) \triangleq \int_{0}^{T_{\delta}} Ce^{A(T_s-\tau)}B_d \dot{d}(t+T_i+\tau) d\tau$, where $T_{\delta}$ is defined below  \eqref{eq:x_T_p} (with $T=T_s$).
}
\pt{
Further, \eqref{eq:futureoutputbounds} holds and consequently, the output constraint in  \eqref{eq:outputbounds} holds.
} 

\end{proposition}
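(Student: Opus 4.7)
The plan is to verify two things: (i) that the total time derivative of $h$ from Lemma \ref{first} simplifies to the expression in \eqref{eq:cbf_dot}, and (ii) that a suitable $u\in\mathcal{U}$ and class $\mathcal{K}_\infty$ function $\alpha$ exist so that $h$ meets Definition \ref{def:rCBF}. Once both are in hand, Theorem \ref{rbs:safe} delivers controlled invariance of $\mathcal{C}_{z,p,t}$, and the construction of $h$ in Lemma \ref{first} then yields \eqref{eq:futureoutputbounds}, which in turn enforces \eqref{eq:outputbounds}.

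For (i), starting from $h(z,\mathbf{d}_p,t)=y_m-\sgn(\dot y(t))Cz_w(t+T_s(t))$, I would treat $\sgn(\dot y(t))$ as piecewise constant (it only switches at isolated instants where $\dot y=0$) and apply the chain rule to $Cz_w(t+T_s(t))$, decomposing it into an explicit-$t$ partial and a $\dot T_s$ contribution. Differentiating the $e^{AT_s}z(t)$ piece in the explicit-$t$ partial and substituting the predicted dynamics $\dot z(t)=Az(t)+Bu(t)+B_d d(t+T_i)$ from \eqref{eqn:delay_sys_z} produces $Ce^{AT_s}(Az(t)+Bu(t)+B_d d(t+T_i))$; differentiating the previewed disturbance integral in $\phi(t,T_s)$ under the integral sign yields precisely the $\psi(t,T_s)$ term in \eqref{eq:cbf_dot}; the remaining $\hat u$ and $\hat d$ integrals contain no explicit $t$-dependence (under locally constant $\sgn(\dot y)$), so they do not contribute.

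The main obstacle, and the crux of the computation, is showing that the $\dot T_s$ coefficient vanishes. Carefully collecting $\partial_{T_s}[Cz_w(t+T_s)]$ across the four terms defining $z_w$ gives $C(Az_w(t+T_s)+B\hat u+B_d\hat d)$, which is just $C\dot z_w$ evaluated at $t+T_s$ under the worst-case piecewise-constant inputs. However, by the stopping-time characterization in Lemma \ref{lem:T_s}, $CAz_w(t+T_s)=0$, and by the relative-degree-2 assumption, $CB=CB_d=0$; hence the entire $\dot T_s$ contribution drops out. This envelope-like cancellation is what makes \eqref{eq:cbf_dot} clean and closed-form, and I expect it to require the most care.

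For (ii), the piecewise-constant input $u=\hat u\in\mathcal{U}$ from Lemma \ref{first} serves as the witness control. Lemma \ref{first} establishes that $\hat u$ maintains $h\ge 0$ along the trajectories of $\Sigma_{pred}$, so by a Nagumo-type argument $\dot h(\hat u)\ge 0$ whenever $h=0$; combined with continuity, this implies that some class $\mathcal{K}_\infty$ function $\alpha$ (for example a linear $\alpha(r)=\gamma r$ with $\gamma>0$ chosen large enough to dominate any lower bound on $-\dot h(\hat u)$ over the operating range) satisfies \eqref{eq:rCBF}. Consequently $K_{\mathcal{C}}(z,\mathbf{d}_p,t)$ is nonempty, $h$ qualifies as an LPrev-CBF per Definition \ref{def:rCBF}, and invoking Theorem \ref{rbs:safe} closes the argument, yielding \eqref{eq:futureoutputbounds} and finally \eqref{eq:outputbounds}.
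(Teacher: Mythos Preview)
Your approach is essentially the paper's: differentiate $h$ via Leibniz's rule, substitute the predictive dynamics \eqref{eqn:delay_sys_z}, and eliminate the $\dot T_s$ coefficient using $CAz_w(t+T_s)=0$ from Lemma~\ref{lem:T_s} together with $CB=CB_d=0$. Your part~(ii) also mirrors the feasibility argument in Lemma~\ref{first}.

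There is, however, one genuine omission. The integration limit $T_\delta=\min(T_p-T_i,T_s(t))$ appearing in both $\phi(t,T_s)$ (upper limit) and $\hat\epsilon(t,T_s)$ (lower limit) is itself time-varying, so your claim that ``the remaining $\hat u$ and $\hat d$ integrals contain no explicit $t$-dependence'' is not correct, and the $T_\delta$-boundary contributions do not automatically fold into your $\partial_{T_s}[Cz_w]$ collection because $T_\delta$ need not equal $T_s$. Applying Leibniz's rule at the $T_\delta$ limit produces an additional term of the form
\[
Ce^{A(T_s-T_\delta)}B_d\big(d(t+T_i+T_\delta)\pm\hat d\big)\,\dot T_\delta,
\]
which the paper isolates separately and then kills by a two-case argument: if $T_\delta=T_p-T_i$ then $\dot T_\delta=0$ since $T_p,T_i$ are constants; if $T_\delta=T_s$ then $Ce^{A(T_s-T_\delta)}B_d=CB_d=0$ by the relative-degree-2 assumption. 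Either way the term vanishes, so your final expression \eqref{eq:cbf_dot} is correct, but the argument as written skips over this step and would be incomplete without it.
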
}
\begin{proof} 
\vspace{-10pt} We begin the proof by considering the closed-form candidate LPrev-CBF $h$ from  \eqref{eq:h} in Lemma \ref{first}. Next, by applying Theorem \ref{rbs:safe} to $h$, a closed form expression for $\dot{h}$ in \eqref{eq:rCBF} is obtained by computing the derivative of $h$ with respective to current time $t$. Consequently, $\dot{h}(z,u,\mathbf{d}_p\sy{,t})=-\sgn{(\dot{y}(t))}\frac{d}{dt}y_w(t+T_s)$, where $\frac{d}{dt}y_w(t+T_s)$ is the time derivative of $y_w(t+T_s)=Cz_w(t+T_s)$ with $z_w(t+T_s)$ defined below \eqref{eq:h}, which can be derived by employing Leibnitz integral rule and leveraging the fact that $CB=0$ and $CB_d=0$ (relative degree 2 assumption) to obtain

\vspace{-5pt}\begin{align} \label{eq:dydt1}
\hspace{-0.3cm}\begin{array}{l}
\frac{d}{dt}y_{w}(t+T_s)=Ce^{AT_s}\dot{z}(t)+\psi(t,T_s)\\
\hspace{0.2cm}+CA(\phi(t,T_s)+\hat{\epsilon}(t,T_s)+(\int_0^{T_s}e^{A(T_s-\tau)}d\tau)B\hat{u})\dot{T}_s\\
\hspace{0.2cm}+Ce^{A(T_s-T_{\delta})}B_d (d(t+T_i+T_{\delta})+\hat{d})\dot{T}_{\delta}\\
= Ce^{AT_s}\dot{z}(t)+\psi(t,T_s)+CAz_w(t+T_s)\dot{T}_s\\
\hspace{0.2cm}+Ce^{A(T_s-T_{\delta})}B_d (d(t+T_i+T_{\delta})+\hat{d})\dot{T}_{\delta}, 
\end{array}\hspace{-0.3cm}
\end{align}
where we defined $\psi(t,T_s)$ below  \eqref{eq:cbf_dot}, with $\phi(t,T_s)$ and $\hat{\epsilon}(t,T_s)$ defined below \eqref{eq:x_T_p} 
and \eqref{eq:h}, respectively, and applied  the definition of $z_w(t+T_s)$  below \eqref{eq:h} in the second equality.

Next, by Lemma \ref{lem:T_s}, $CAz_w(t+T_s)=0$, i.e., the third term in the above becomes $0$. Additionally, since $T_{\delta}=\min(T_p-T_i,T_s)$, we have that when $T_{\delta}=T_p-T_i$, $\dot{T}_{\delta}=0$ ($T_p$ and $T_i$ are fixed constants) and when $T_{\delta}=T_s$,  $Ce^{A(T_s-T_{\delta})}B_d=CB_d=0$ (by relative degree 2 assumption); consequently, the final term in the above  that multiplies $\dot{T}_{\delta}$ is also equal to $0$. Thus, the expression for $\dot{h}$ simplifies to
$\dot{h}(z,u,\mathbf{d}_p\sy{,t})=-\sgn{(\dot{y}(t))}\frac{d}{dt}y_{w}(t+T_s)\hspace{-0.0cm}=\hspace{-0.05cm}-\sgn{(\dot{y}(t))}(Ce^{AT_s}\dot{z}(t)+ \psi(t,T_s))$. Finally, we obtain \eqref{eq:cbf_dot} by substituting the expression for $\dot{z}(t)$ from the predictive state dynamics in \eqref{eqn:delay_sys_z}.
\renewcommand{\qedsymbol}{}
\end{proof}

\vspace{-10pt}\subsection{Optimization-Based Safety Control}
\pt{Next, the proposed LPrev-CBF is coupled with 
\yo{a nominal} controller to minimally modify it to guarantee safety.}
\begin{proposition}[Optimization-Based Safety Control]\label{prop:control}
\pt{For the \yo{input-delay} system $\Sigma_{delay}$ in \eqref{eqn:delay_sys}, at any time $t$ any (stabilizing) 
\yo{nominal (input-delay) controller $u = k(x,z,t)$ with $z(t)$ in \eqref{eq:z_exact}, if needed,} 
 can be minimally modified to guarantee safety by computing a new safe control input $u(x,\mathbf{d}_p,t)$ that is a solution to the following quadratic program (QP):}
\begin{align}\label{eq:legacy_qp}
\begin{array}{l}
 \sy{u(x,\mathbf{d}_p,t})=   \textstyle\argmin\limits_
{u \in \mathcal{U}} \frac{1}{2} \|u-k(x\sy{,z,t})\| \\
\hspace{1.8cm}s.t. \ P(t) u \le q(t),
\end{array}
\end{align}
\pt{with \yo{$z$,} $\psi(t,T_s)$, $h(z,\mathbf{d}_p,t)$ and $T_s$ from \yo{\eqref{eq:z_exact}}, Proposition \ref{prop:Prev-CBF} (as defined below \eqref{eq:cbf_dot}), Lemma \ref{first} and Lemma \ref{lem:T_s}, respectively, and a class $\mathcal{K}_{\infty}$ function $\alpha$, such that:}
\pt{\begin{gather*}
    \begin{array}{rl}
    P(t) \triangleq & \hspace{-0.3cm}\sgn(\dot{y}(t)) C e^{A T_s(t)}B,\\
    q(t) \triangleq & \hspace{-0.3cm} \alpha(h(z,\mathbf{d}_p,t))
     - \sgn(\dot{y}(t))(\psi(t,T_s) \\
     &\
     \hspace{-0.45cm}
     +Ce^{A T_s(t)}(Az(t)+B_dd(t+T_i))).
    \end{array}
\end{gather*}}
\end{proposition}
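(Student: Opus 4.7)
The plan is to show that the QP constraint $P(t)u \le q(t)$ is algebraically equivalent to the LPrev-CBF inequality \eqref{eq:rCBF} when combined with the closed-form expression for $\dot{h}$ derived in Proposition~\ref{prop:Prev-CBF}, so that any solution of the QP automatically lies in the safe input set $K_{\mathcal{C}}(z,\mathbf{d}_p,t)$ from Definition~\ref{def:rCBF}. Once this equivalence is established, safety follows from Theorem~\ref{rbs:safe}, and minimal modification with respect to the nominal controller $k(x,z,t)$ follows from the QP's quadratic objective.

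First, I would start from the LPrev-CBF condition $\dot{h}(z,u,\mathbf{d}_p,t)\ge -\alpha(h(z,\mathbf{d}_p,t))$ and substitute the closed-form expression for $\dot{h}$ from \eqref{eq:cbf_dot} in Proposition~\ref{prop:Prev-CBF}, namely
\begin{align*}
\dot{h}(z,u,\mathbf{d}_p,t) = -\sgn(\dot{y}(t))\bigl[Ce^{AT_s}(Az(t)+Bu(t)+B_dd(t+T_i))+\psi(t,T_s)\bigr].
\end{align*}
Moving the $u$-dependent term to the left-hand side and all other terms to the right yields exactly
\begin{align*}
\sgn(\dot{y}(t))\,Ce^{AT_s(t)}B\,u \;\le\; \alpha(h(z,\mathbf{d}_p,t)) - \sgn(\dot{y}(t))\bigl(\psi(t,T_s)+Ce^{AT_s(t)}(Az(t)+B_d d(t+T_i))\bigr),
\end{align*}
which is $P(t)u \le q(t)$ with $P(t)$ and $q(t)$ as defined in the proposition statement. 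Combined with the hard constraint $u \in \mathcal{U}$, this shows that the QP's feasible set coincides with $K_{\mathcal{C}}(z,\mathbf{d}_p,t)$.

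Next, I would argue that the QP is always feasible. By Lemma~\ref{first}, the candidate LPrev-CBF $h$ is valid, meaning there always exists a piecewise-constant input $\hat{u}(t) = -\sgn(\dot{y}(t))\diag(\sgn(CAB))u_m \in \mathcal{U}$ that satisfies $h(z,\mathbf{d}_p,t)\ge 0$ for all $t\ge 0$. Combining this with the choice of a class-$\mathcal{K}_\infty$ function $\alpha$ in \eqref{eq:rCBF}, the LPrev-CBF condition admits at least $\hat{u}(t)$ as a feasible point of the QP, so the feasible region is non-empty. Hence the QP has a minimizer, which we denote $u(x,\mathbf{d}_p,t)$, and the solution depends Lipschitz-continuously on the parameters under standard regularity assumptions on $k$, $\alpha$, and the problem data.

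Finally, I would close the argument by invoking Theorem~\ref{rbs:safe}: since the QP solution $u(x,\mathbf{d}_p,t)\in K_{\mathcal{C}}(z,\mathbf{d}_p,t)$, the limited preview safe set $\mathcal{C}_{z,p,t}$ in \eqref{eq:Cxpw} is rendered forward controlled invariant, which by Lemma~\ref{first} implies satisfaction of the future output bound \eqref{eq:futureoutputbounds} and hence the original output constraint \eqref{eq:outputbounds} for all $t\ge 0$; thus $\Sigma_{delay}$ is safe. The expected main obstacle is the feasibility argument: one must verify that the pre-existing worst-case input $\hat{u}$ used to define $h$ in Lemma~\ref{first} remains admissible under the refined constraint $P(t)u \le q(t)$ that includes the class-$\mathcal{K}_\infty$ relaxation $\alpha(h)$, and that this choice is indeed compatible with $\mathcal{U}$ given the input bounds $u_m$. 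Beyond this, the remaining steps are routine algebraic manipulations and direct applications of the results established earlier in the paper.
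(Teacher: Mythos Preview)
Your proposal is correct and follows essentially the same approach as the paper: the core step in both is the algebraic rearrangement of the LPrev-CBF inequality $\dot{h}\ge -\alpha(h)$, with $\dot{h}$ given by \eqref{eq:cbf_dot}, into the linear constraint $P(t)u\le q(t)$. The paper's proof stops there, whereas you additionally (and correctly) argue QP feasibility via the worst-case input $\hat{u}$ from Lemma~\ref{first} and explicitly invoke Theorem~\ref{rbs:safe} for the safety conclusion; these extra steps are sound but are left implicit in the paper's brief proof.
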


\vspace{-15pt}\begin{proof} \pt{The LPrev-CBF constraint in Theorem \ref{rbs:safe} and Definition \ref{def:rCBF}, $\dot{h}(z,u,\mathbf{d}_p,t)\ge -\alpha(h(z,\mathbf{d}_p,t))$, with $h(z,\mathbf{d}_p,t)$ from Lemma \ref{first} and $\dot{h}(z,u,\mathbf{d}_p,t)$ from Proposition \ref{prop:Prev-CBF}   can be written as:
   \begin{align*}
    \begin{array}{rl}
    -\alpha(h(z,\mathbf{d}_p,t))&\hspace{-0.3cm}\le -\sgn{(\dot{y}(t))}(Ce^{AT_s}(Az(t)+Bu(t)\\
&\
B_dd(t+T_i))+\psi(t,T_s)),
    \end{array}
\end{align*}
which can be rearranged in the form of $P(t)u\le q(t)$ with the $P(t)$ and $q(t)$ given above. 
}
\end{proof}

\vspace{-8pt}\yo{In the above, solving \eqref{eqn:stop_exp} analytically to find $T_s(t)$ for the application of Proposition \ref{prop:control} is non-trivial, but it can be found numerically, e.g., using  MATLAB functions \texttt{fsolve}, \texttt{fzero} or \texttt{vpasolve}.} 
\yo{Further, note that in the absence of input delay (i.e., when $T_i=0)$, the results in this paper are in itself a novel contribution for when the preview horizon is limited and fixed, in contrast to our prior work in \cite{pati2023preview} that assumed unlimited preview.}
\section{Illustrative Examples}
\subsection{Assistive Shoulder Exoskeleton Robot}\label{sec:simulation}

\sh{Consider the dynamics of an industrial shoulder exoskeleton robot system \cite{ott2010unified} with fixed input delay  given by:
\begin{align}\label{eq:admittance}
I_{j}\Ddot{e}(t)+B_{j}\Dot{e}(t)+K_{j}e(t) = \tau_{e}(t) +  u(t-T_i),
\end{align}
where $e(t)=\theta(t)-\theta_d(t)$ is the angular displacement error, with $\dot{e}$ and $\ddot{e}$ as its velocity and acceleration. The terms $I_{j}= I_{h}+I_{r}$, $B_{j}= B_{h}+B_{r}$, and $K_{j}= K_{h}+K_{r}$ represent the combined inertia, damping, and stiffness of the human-exoskeleton system. Here, subscripts $h$ and $r$ denote human and robot components, respectively.}

\sh{Assuming the robot-human shoulder joint is aligned and the interaction torque $\tau_{e}$ is previewable, satisfying Assumption \ref{as:2}, the control input $u$ functions as either a spring or damping force to keep system states within safety boundaries, specifically ensuring $|e(t)|\le \delta$.}
\sh{When the human-robot system \eqref{eq:admittance} is transformed into the state space form of the predicted system\footnote{\label{note3}\yo{Note that we directly use the predicted system since $z(t)$ is exactly known under Assumption \ref{as:2} and also such that we can compare with other related approaches in the literature that do not consider input delays. Moreover, using this system for both examples allows us to illustrate the benefits of preview even in the absence of input delays.}} as described in \eqref{eqn:delay_sys_z}, its state is represented by $z(t) = \begin{bmatrix} e^{T}(t+T_i)  &\dot{e}^{T}(t+T_i)\end{bmatrix}^{T}$. The previewable disturbance for this system is given by $d(t+T_i)=\tau_e(t+T_i)$ and the matrices for this system are:
\begin{gather*}
    A = \begin{bmatrix} 0 & 1 \\ -I_{j}^{-1}K_{j} & -I_{j}^{-1}B_{j} \end{bmatrix}, B =  \begin{bmatrix} 0 \\ I_j^{-1} \end{bmatrix},
    B_d = \begin{bmatrix} 0 \\ I_{j}^{-1} \end{bmatrix},\\ 
    B_w = \begin{bmatrix} 0 \\ I_{j}^{-1} \end{bmatrix},
    C = \begin{bmatrix} 1 & 0\end{bmatrix}.
\end{gather*}
The output $y(t)=e(t+T_i)$ must satisfy $|e(t+T_i)|\le \delta$, and the control input is bounded as $|u(t)|\le u_m$.}

\sh{In this study, the simulation parameters are: $I_{j}=1\, Nms^{2}/rad$, $B_{j}=2 \,Nms/rad$, $K_{j}=2\,Nm/rad$, $\tau_{e}(t)=0.43\sin(0.2\pi t) \, Nm$, and $\delta=0.2 \,rad$. The constant or fixed preview horizon and input delay times are set at $T_p=10  \,ms$ and $T_i=8 \,ms$, respectively. Input delays in robotic systems are typically in the order of milliseconds  \cite{andersen2015measuring}, but since our shoulder robot system in \cite{hunt2018new} is not commercial and operates at a sample rate of $4\, ms$, for this simulation, we have chosen a fixed time delay of $8\, ms$, which is double the sample rate of our system.
The control input $u_{m}$ varies between 1.119 to 2.0 to analyze its effect on proposed LPrev-CBF approach in Section \ref{sec:main} and contrasted with the standard CBF approach in \cite{ames2016control}, and our prior Prev-CBF \cite{pati2023preview}, which does not impose limitations on the preview horizon. For simplicity, the nominal controller is zero, meaning that $u(t)$ also indicates the safety controller's intervention. The rationale behind selecting $u_{m}$ from 1.119 is to illustrate an instance in which 
the standard CBF fails to be safe.}

\subsubsection{\yo{LPrev-CBF}s} \sh{The LPrev-CBF, delineated in Section \ref{sec:main}, ensures safety and controlled invariance for the shoulder robot, requiring $|y(t)|=|C z(t) | \le y_m = \delta$. The control input $u(t)$ adheres to the constraint $|u(t)|\le u_m$. We specifically employ the optimization-based safety controller from Proposition \ref{prop:control}, with the nominal control input set to zero, denoted as $k(z,t)=0$. The closed-form LPrev-CBF is further elaborated in Lemmas \ref{first} and \ref{lem:T_s}, and Proposition \ref{prop:Prev-CBF}.}

\subsubsection{Standard CBFs} 
\sh{We then compare our approach with the standard CBF  from \cite{ames2016control}, specifically from the lane keeping example in \cite[Section V-B]{ames2016control}. Besides the output constraint:
$$|y(t)|=|e(t+T_i)| =|C z(t)| \le y_m,$$
it assumes a bounded output acceleration:
$$|\ddot{y}(t)| =|\ddot{e}(t+T_i)| = |C\ddot{z}(t)| \le a_{\max},$$
which, given input constraints $|u|\le u_m$, is inherently bounded by the dynamics in \eqref{eq:admittance}:
$$u=I_d \ddot{e}+B_{d}\Dot{e}+ K_{d}e - \tau_{e}.$$
\yy{Then, using worst-case bounds on $\ddot{e}$, $\dot{e}$, $e$, and $\tau_e$ given by $\ddot{e}_{\max}=a_{\max}$, $\dot{e}_{\max}$, ${e}_{\max}=y_m$, and $\tau_{e,\max}$ (from Assumption \ref{as:2}), respectively,} 
the triangle inequality gives:
$$|u| \le I_d a_{\max} +B_d \dot{e}_{\max} + K_d e_{\max} + \tau_{e,\max} \yy{\ \triangleq\ } u_m.$$
The output constant acceleration bound $a_{\max}$ is then:
\begin{gather} \label{eq:amax}
    a_{\max}= I_d^{-1}(u_m - B_d \dot{e}_{\max} - K_d e_{\max} - \tau_{e,\max}).
\end{gather}
To meet this bound, the standard CBF method's control input must be:
\begin{gather}\label{eq:input-bound-amax}
    u (t) \in [-I_d a_{\max}+F_0(t),I_d a_{\max} +F_0(t)],
\end{gather} 
with $F_0(t)\triangleq B_d \dot{e}(t) +K_d e(t) - \tau_e (t)$.}

\sh{The standard lane keeping CBF approach from \cite[Section V-B]{ames2016control} proposes the following CBF:
\begin{align}\label{eq:CBFs_ames}
h(z) = (y_{m}-sgn(\dot{y}(t))y(t))-\frac{\dot{y}(t)^{2}}{2 a_{\max}}.
\end{align}
In our simulation, parameters are set as: $\dot{e}_{\max}=0.1326$, $e_{\max}=0.2$, and $d_m=0.43$. By adjusting the input bounds $u_m$ between 1.119 and 2.0, we proportionally modify $a_{max}$ within the range of 0.0238 to 0.9048 to meet the conditions of \eqref{eq:amax}.}

\subsubsection{\sh{Prev-CBFs\cite{pati2023preview}}}
\sh{We further compared between the proposed LPrev-CBF with $T_p=10 \,ms$ and our prior Prev-CBF in \cite{pati2023preview}, where the preview horizon $T_p$ is unconstrained and unlimited, implying \yy{full} 
knowledge of \shn{previewable} disturbances throughout the entire horizon.} 


\sh{Figure \ref{fig:1} shows the simulated output and input trajectories under various conditions: without a safety controller, with the standard CBF in \cite{ames2016control}, with Prev-CBF in \cite{pati2023preview}, and with LPrev-CBF, for $u_{m}=1.119$ and $u_{m}=1.8$. Without safety measures, the safety constraint (depicted by black dashed lines) is breached. By contrast, the standard CBF, Prev-CBF, and LPrev-CBF ensure safety. Notably, with a smaller $u_m$, the standard CBF deviates significantly from the  trajectory without safety, while Prev-CBF and LPrev-CBF remain closer. This distinction is also evident in input trajectories. Moreover, the standard CBF intervenes earlier and more aggressively, while Prev-CBF and LPrev-CBF operates primarily near the safety limits. For $u_m=1.8$, interventions from all methods are minimal, although the standard CBF still intervenes sooner than Prev-CBF and LPrev-CBF.

\begin{figure}[t]
\centering
\includegraphics[width=0.5\textwidth,trim=110mm 77cm 10mm 0mm,clip]{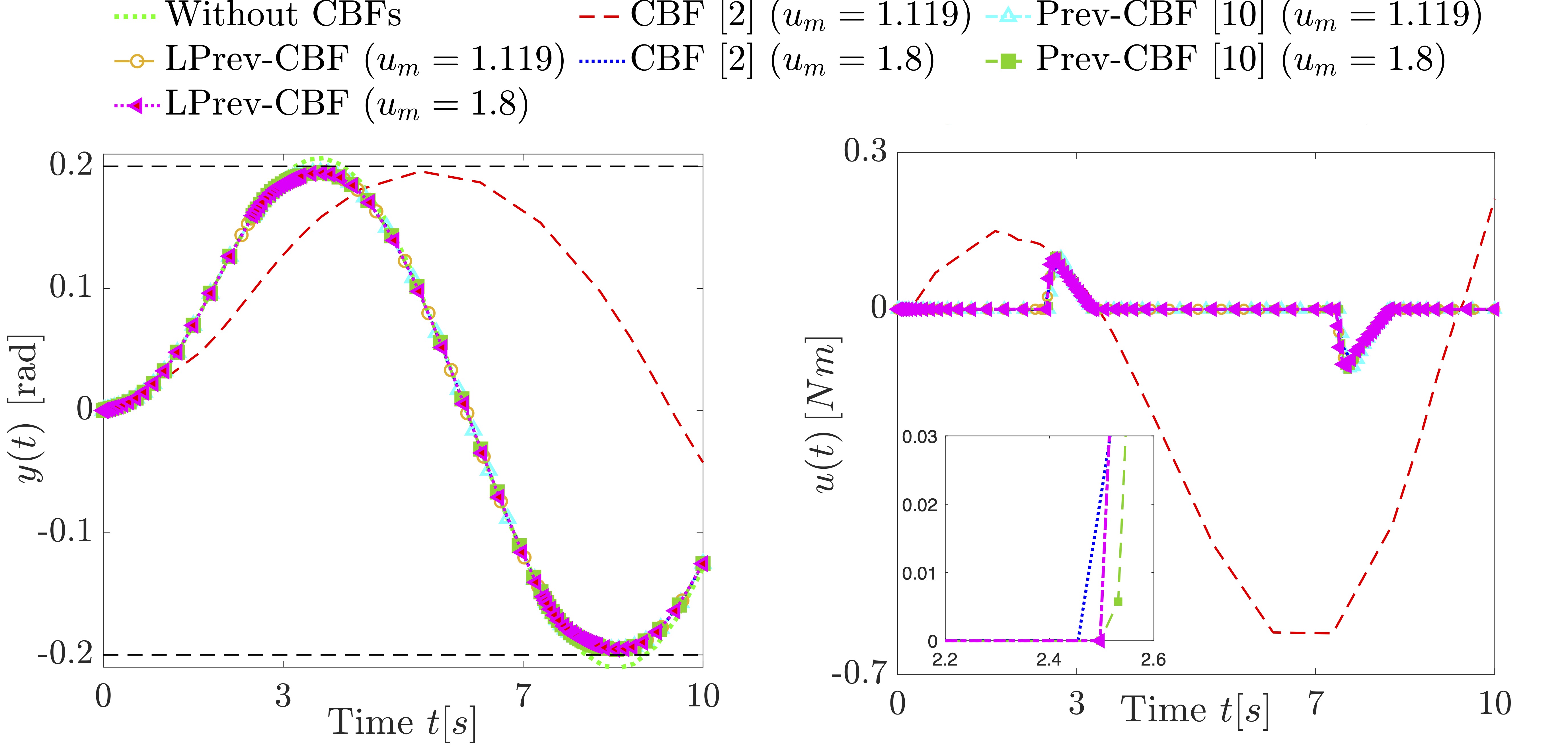}
\includegraphics[width=0.5\textwidth,trim=0mm 0mm 5mm 150mm,clip]{images/figure_revised_1.jpg}
\caption{Angular error (\textbf{left}) and input (\textbf{right}) trajectories: 
(i) Without CBFs (exceeds black dashed bounds), (ii) standard CBF \cite{ames2016control} with  \sy{$u_m=1.119$, (iii) Prev-CBF \cite{pati2023preview} with $u_m=1.119$, (iv) LPrev CBF with  $u_m=1.119$, (v) standard CBF \cite{ames2016control} with  $u_m=1.8$, (vi) Prev-CBF \cite{pati2023preview} with $u_m=1.8$}, and (vii) LPrev CBF with  $u_m=1.8$. Furthermore, an amplified segment in the (\textbf{right}) plot elucidates the initial intervention disparities for all three conditions when $u_m=1.8$.}
\label{fig:1} \vspace{-0.45cm}
\end{figure}

\begin{figure}[t]
\centering
\includegraphics[scale=0.375,trim=0mm 0mm 5mm 0mm]{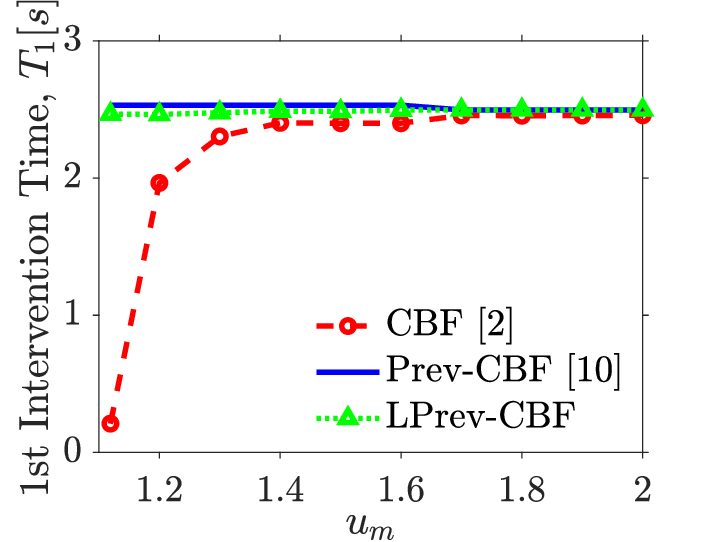}
\includegraphics[scale=0.375,trim=0mm 0mm 10mm 0mm]{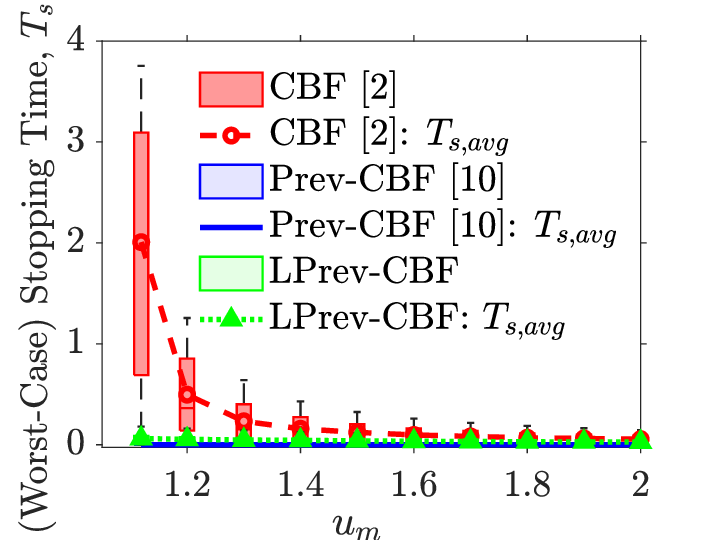}
\caption{Intervention times (\textbf{left}) and stopping times (\textbf{right}, shown as box plots) vary with $u_m$ between 1.119 and 2. The standard CBF \cite{ames2016control} typically acts sooner and has extended stopping times compared to LPrev-CBF. Conversely, Prev-CBF \cite{pati2023preview} shows late intervention and exhibits shorter stopping times than LPrev-CBF. However, these differences become less pronounced as $u_m$ rises}\label{fig:2} \vspace{-0.2cm}
\end{figure}

We also assessed the effect of varying $u_m$ on the initial intervention time $T_1$, marking the first non-zero input instance. A $T_1$ closer to 3.1643 $s$ (the violation time without CBFs) indicates later safety intervention, implying a less conservative safety controller. Additionally, we examined the influence on the (worst-case) stopping time $T_s$. As seen in Figure \ref{fig:2}, LPrev-CBFs intervene later and have lower (worst-case) stopping times than standard CBFs, suggesting their superior utilization of \shn{preview information}, resulting in less conservatism. However, when compared with Prev-CBFs, the LPrev-CBFs intervene earlier and have a little larger (worst-case) stopping times than Prev-CBFs, which is as expected since the \shn{preview} information is more limited with LPrev-CBFs. 
In summary, the results affirm that even with a limited preview horizon, our proposed LPrev-CBF ensures system safety with reduced conservatism.
}

\subsection{Lane Keeping with Road Curvature Preview}\label{AA}
\pt{Next, we consider the lane-keeping example of lateral positioning of a vehicle when limited preview of the road curvature is available. Specifically, we are inspired by the lane-keeping problem in \cite[Section V-B]{ames2016control}, and the relevant predictive state dynamics\footnoteref{note3} can be written as in \eqref{eqn:delay_sys_z} with }
\begin{gather*}
A =
    \begin{bmatrix}
    0 & 1 & v_0 & 0 \\
    0 &-\frac{C_f+C_r}{Mv_0} & 0 & \frac{bC_r-aC_f}{Mv_0}-v_0 \\
    0 & 0 & 0 & 1 \\
    0 &\frac{bC_r-aC_f}{I_zv_0} & 0 & -\frac{a^2C_f+b^2C_r}{I_zv_0} 
    \end{bmatrix},
    B=
    \begin{bmatrix}
    0\\
     \frac{C_f}{M} \\
     0\\
     a\frac{C_f}{I_z}
    \end{bmatrix},\\ B_d=
    \begin{bmatrix}
    0 &
     0 &
     -1 &
     0
    \end{bmatrix}^\top,  C=\begin{bmatrix}
    1 &
     0 &
     0 &
     0
    \end{bmatrix},
\end{gather*}
 \pt{with state $x\triangleq [y, \nu, \psi, r]^\top$, representing lateral velocity $\nu$, lateral displacement $y$, yaw rate $r$ and error yaw angle $\psi$. In this specific setup, the steering angle of the front tire serves as the input $u$ to our system, whereas the desired yaw rate, $r_d=\frac{v_0}{R}$,  is the disturbance with constant longitudinal velocity $v_0$ and (unknown but previewable) road curvature $R$. In this example, we considered a sinusoidal $r_d$ as disturbance. Further, the known signals and system parameters are: Vehicle mass $M=1650\,kg$, distances of rear and front wheels from center of mass $b= 1.59\,m$ and $a=1.11\,m$, respectively, rear and front tire stiffness parameters $C_r=133000\,N/rad$ and $C_f=98800\,N/rad$, respectively, and the vehicle moment of inertial with respective its center of mass $I_z=2315.3\,kgm^2$, taken from \cite[Section V-B]{ames2016control}. 
 
 Additionally, the system has a constant input delay of $T_i=10\,ms$ and a preview of road curvature for a constant preview time $T_p=20\,ms$ along with the initial predictive state $z(0)=[0.5,1.2,0,0]^{\top}$. Next, for stabilizing the vehicle in the center of the lane, we employ a nominal controller $k(z,t)=K(z_{ff}-z)$, with $z_{ff}=[0 \ 0 \ 0 \ r_d]^\top$. Safety here constitutes adhering to the lane boundary constraint $|y|\leq{y_m}$, where $y_m$ is chosen as 0.6 $m$ in this example. Further, the input constraint is $|u|\leq{u_m}$, where we consider three distinct values of $u_m$ for comparison.}
\begin{figure}[t]
\centering
\includegraphics[scale=0.27,trim=10mm 0mm 5mm 0mm]{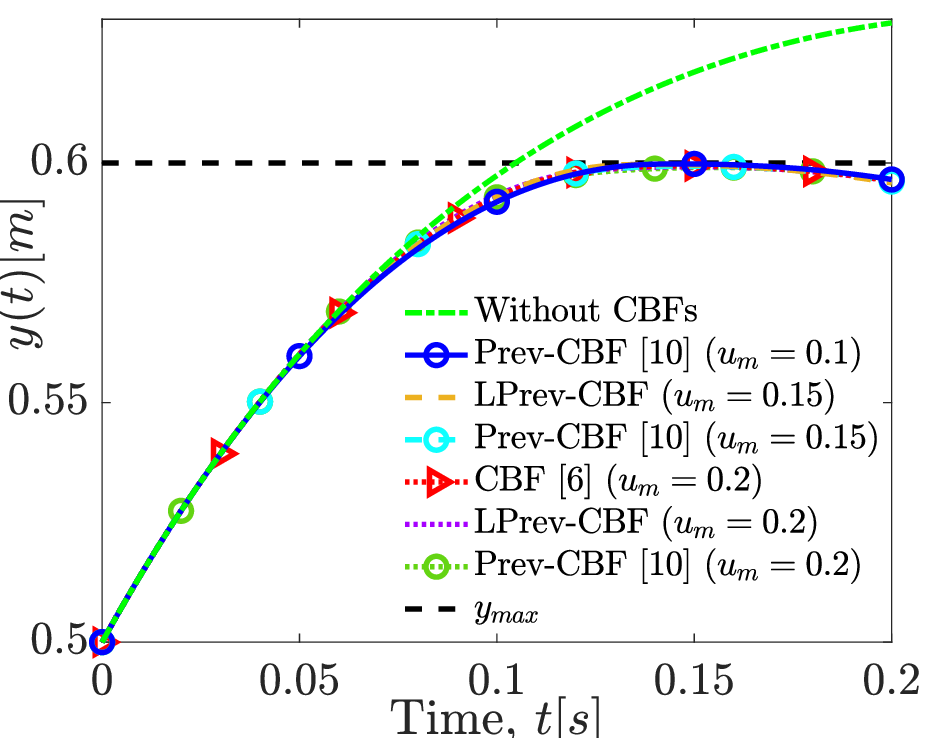} 
\includegraphics[scale=0.27,trim=0mm 0mm 10mm 0mm]{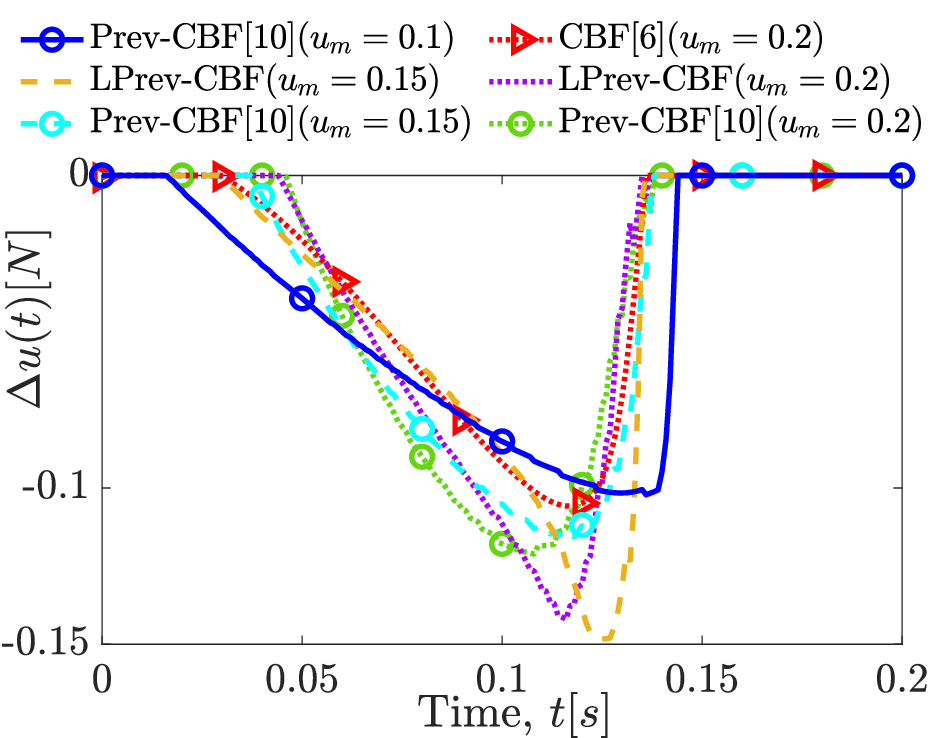}
\caption{Lateral displacement trajectories $y(t)$ (\textbf{left}) and trajectories of CBF intervention of $u(t)$ (\textbf{right}) given by $\Delta u(t)\triangleq u(t)-k(x(t),t)$, where $k(x(t),t)$ is the legacy controller.}\vspace{-0.2cm}
\label{fig:3lk}
\end{figure}
\pt{
\subsubsection{Limited Preview CBFs} 
We apply the proposed LPrev-CBF in \eqref{eq:h} in Lemma \ref{first}  within an optimization-based framework in Proposition \ref{prop:control} to this problem and also compare its performance with our prior design, Prev-CBF \cite{pati2023preview}, with unlimited preview and with the standard CBF \cite{ames2016control} that does not consider preview. 
}
\subsubsection{Preview CBFs}
For comparison, we consider Prev-CBF that we previously proposed in \cite{pati2023preview}, where a sufficiently large or ``infinite" preview horizon $T_p$ is assumed and 
the rest of the simulation parameters for Prev-CBF are kept the same as for the proposed LPrev-CBF.

\subsubsection{Standard CBF}  \pt{Lastly, we also consider the Standard Lane-Keeping CBF proposed in \cite[Section V-B]{ames2016control} given in \eqref{eq:CBFs_ames}, where $y_m$ represents half the lane width and $\dot{y}$ the lateral velocity, calculated from $\dot{y}(t)=\nu + \psi{v_0}$ derived from vehicle dynamics. 
Using the relationship between $\ddot{y}(t)$ and  $u$ derived from the lane-keeping dynamics,  }
\begin{align*}
\ddot{y}=\frac{C_fu - F_0}{M},
\end{align*}
\pt{where $F_0\triangleq C_f\frac{\nu+ar}{v_0}+C_r\frac{\nu-br}{v_0} +Mv_0r_d$ such that $|F_0|\le F_{0,\max}$ with a known $F_{0,\max}$, we can formulate the acceleration limits $a_{\max}$ 
as
\begin{align*}
a_{\max}=\textstyle\frac{1}{M}(C_f u_m - F_{0,\max})
\end{align*}
\pt{and with the control input that satisfies:}
\begin{gather}\label{eq:input-bound-amax2}
   \sy{ u (t) \in \textstyle [\frac{1}{C_f}(-M a_{\max}\hspace{-0.08cm}+\hspace{-0.08cm}F_0(t)),\frac{1}{C_f}(M a_{\max} \hspace{-0.08cm}+\hspace{-0.08cm}F_0(t))].}
\end{gather}

As evident from Figure \ref{fig:3lk} (left), in the absence of any CBFs, the vehicle with just the nominal controller violates the lateral safety condition, whereas the proposed LPrev-CBF under input constraints $u_m \in \{0.2,0.15\}$, 
Prev-CBF  \cite{pati2023preview} under input constraints $u_m\in \{0.2,0.15,0.1\}$ and 
the Standard CBF \cite{ames2016control} under input constraints $u_m=0.2$  ensure that the vehicle stays within its lane.
Moreover, from our simulations, we observe that the approach with the longest preview, i.e., Prev-CBF that has unlimited preview, can remain safe with the least control limit, $u_m\ge 0.09$, while the standard CBF that does not utilize preview requires the highest control authority, $u_m\ge 0.18$, to remain safe; thus, preview is clearly advantageous. On the other hand, the proposed LPrev-CBF provides a middle ground where preview is available but limited, and can maintain safety with a smaller input bound,  $u_m\ge 0.14$, than the standard CBF. In other words, there is some form of partial ordering of the 3 approaches: Prev-CBF $>$ LPrev-CBF $>$ Standard CBF in terms of minimum control authority needed for safety. 

Further, from Figure \ref{fig:3lk} (right), it can be observed that for a given controller (e.g., Prev-CBF with $u_m\in\{0.2,0.15,0.1\}$), 
the 
greater the actuation/control authority is, the lesser the safety controller needs to intervene against the nominal controller, i.e., the intervention time is reduced. 
Similarly, for a fixed  actuation limit (e.g.,  $u_m=0.2$), Prev-CBF interferes less than the proposed LPrev-CBF that in turn interferes less than the standard CBF.}\vspace{0.1cm}

\noindent\sy{\emph{Discussion of Results.}} \pt{To summarize, the examples presented along with the proposed approaches emphasize the value of \shn{preview} information even if  only limited preview is available and also when there is input delay. This provides broader actuation authority range for other control goals, e.g., for maximizing performance, and causes less overall interference against the nominal controller for safety when compared to the cases where \shn{preview} information is not utilized. Further, with the increase in  actuation limit $u_m$, the value or advantages of the \shn{preview} information decreases 
 since the input range becomes large enough to counter effects of relatively smaller 
 worst-case disturbances. The finding of this work confirms findings of \cite{liu2021value} about the value of preview for discrete-time systems and also the findings of \cite{pati2023preview} about the value of preview for continuous-time linear systems.}

\phantom{a}

 
\section{Conclusion}

\yo{In this paper, we introduced a limited
preview control barrier function (LPrev-CBF)  for linear
continuous-time input-delay systems where the preview horizon for the previewable disturbances is limited and fixed, e.g., due to limited sensing ranges. In contrast to the standard CBF
approach that simply considers worst-case disturbances, our approach can leverage \shn{preview} information to reduce conservatism, while avoiding the assumption in 
the Prev-CBF approach that the disturbances are previewable
for an infinite horizon. Further, our LPrev-CBF explicitly takes input constraints/bounds
into consideration and thus, it naturally has recursive feasibility/safety
guarantees.
Future directions include the extensions of limited preview CBFs to consider preview horizons that may be state- or time-dependent as well as the presence of non-previewable uncertainties/disturbances.} 


\bibliographystyle{IEEEtran}
\bibliography{conference_0}

\end{document}